\title{Analysis on the computability over the efficient utilization problem of the four-dimensional space-time\thanks{This work was supported by the National Natural Science Foundation of China (Grant No. 60773194) and the Fundamental Research Funds for the Central Universities (HUST: 2010MS099)}}
\author{Wenqi Huang
        \and Kun He \thanks{School of Computer Science and Technology, Huazhong University of Science and Technology, Wuhan 430074, China ({\tt Corresponding author. brooklet60@gmail.com})}}
\begin{document}

\maketitle

\begin{abstract}
This paper formally proposes a problem about the efficient utilization of the four dimensional space-time. Given a cuboid container, a finite number of rigid cuboid items, and the time length that each item should be continuous baked in the container, the problem asks to arrange the starting time for each item being placed into the container and to arrange the position and orientation for each item at each instant during its continuous baking period such that the total time length the container be utilized is as short as possible. Here all side dimensions of the container and of the items are positive real numbers arbitrarily given. Differs from the classical packing problems, the position and orientation of each item in the container could be changed over time. Therefore, according to above mathematical model, the four-dimensional space-time can be utilized more truly and more fully. This paper then proves that there exists an exact algorithm that could solve the problem by finite operations, so we say this problem is weak computable. Based on the understanding of this computability proof, it is expected to design effective approximate algorithms in the near future. A piggyback work completed is a strict proof on the weak computability over general and natural case of the three-dimensional cuboid packing decision problem that all parameters are positive real numbers.
\end{abstract}

\begin{keywords}
rectangle packing, four-dimensional space-time, scheduling, computability, real numbers
\end{keywords}

\begin{AMS}
03D15, 03D78,52C17,90B35
\end{AMS}

\pagestyle{myheadings}
\thispagestyle{plain}
\markboth{WENQI HUANG AND KUN HE }{SIAM JOURNAL ON COMPUTING}

\section{Introduction}

In the real world, there are many NP-hard problems, including packing problem  \cite{ref4}, SAT problem \cite{ref41}, scheduling problem, etc.
  And many of them are related to the utilization of space and time, such as computer memory allocation, database storage allocation and cookies baking.
  At present, researchers usually regard time as a geometric dimension, present such problems as a typical NP-hard problem, the packing problem, and then design feasible yet not very efficient solutions.
  Relevant researches have been on the space scheduling problem \cite{ref1}, the assembly line problem \cite{ref2},  the multiprocessor scheduling problem \cite{ref4,ref3},
   the four-dimensional rectangular packing problem \cite{ref5}, and the high-dimensional packing problem with arbitrary-shaped items \cite{ref6}.

In fact, however, the items need to be baked for some time or
the data should be survival in a certain period could
change their locations or orientations
in their storage area, and such change could make more fully utilization on space and time.
Therefore, this paper will study the optimal scheduling of cuboids packing in a four-dimensional space-time, which has three dimensions of space and one dimension of time. We call it the four-dimensional space-time packing and scheduling problem for short.

Differs from the packing problem, here items could change their positions or orientations
after being placed into the container and before being taken out.
Hence this problem is a promotion and extension of the traditional packing problem.
 We can greatly enhance the flexibility of how to utilize the space and time, and improve the container's space-time utilization ratio by this new model.
Its applications include the manufacturing process in large-scale factories, the memory space allocation on computers, the parallel task scheduling on hypercubes \cite{ref7}, etc.
 In accordance with this new model, we presented an intuitive instance of baking biscuits \cite{ref8}, in which the entire working time of the oven was saved by a third.
 Based on this new mathematical model, it is expected to achieve a schedule that is both good and fast.
 In this way, relevant applications will be more rapid, economic and success. So, the potential economic and social benefits are noticeable.

A lot of valuable work has been done on the two- or three-dimensional packing problems, including the work of Beasley \cite{ref9}, Parreno \cite{ref10}, Bortfeldt \cite{ref11}, etc.
To the best of our knowledge, however, there is little work on the four-dimensional space-time packing and scheduling problem heretofore.
 Even for a simplified and preliminary version of this problem, the high-dimensional packing problem, there is only a handful of exploring work published in the literature.
The reason is that it is not easy to find a corresponding image for
 the four- and higher-dimensional packing problems in the three-dimensional space that we live in, and the computational complexity increases violently with the growth of the dimension.

Concentrating on the high-dimensional packing problems, Huang et al. \cite{ref1} proposed in 1991 that
the space scheduling problem could be regarded as a four-dimensional packing problem,
and a quasi-physical method could be obtained after a proper extension of the quasi-physical method \cite{ref1_1} for solving the three-dimensional packing problems.
Another early discussion on high-dimensional packing problems was presented by Barnes in 1995\cite{ref13}, in which he analyzed a case of packing $1 \times 1 \times ... \times 1 \times n$ rods.
Later on, Fekete et al. discussed a general case of the $n$-dimensional packing problem \cite{ref14,ref15,ref16}.
By using a graph-theoretical characterization of feasible packings, they proposed a  branch-and-bound framework with new classes of lower bounds,
and developed a two-level tree search algorithm for solving the high-dimensional packing problem to optimality.
Fekete et al. also reported their computational results on two- and three-dimensional benchmarks.
However, they did not calculate the four- or higher-dimensional case.
Also, there are only two- and three-dimensional benchmarks in the OR-Library\cite{ref17} and PackLib2 \cite{ref18}.
Fortunately, increasing work on high-dimensional packings could be found in the last few years.
 In 2009, Harren \cite{ref12} proposed two approximate schemes for the general $d$-dimensional hypercube packing problem. And in 2010, Li et al. \cite{ref5} proposed a greedy heuristic for the four-dimensional rectangular packing problem. Li et al. also generated several test instances for the four-dimensional case.

The two-dimensional rectangular packing problem has been proven to be strongly NP-hard
\cite{ref19}. This four-dimensional space-time packing and scheduling problem is developed basing on its extension,
 the three-dimensional cuboid packing problem \cite{ref20},which has a higher computational complexity. And we need to consider the participation of time, a continuous parameter, for the computation.
 Therefore, whether this problem is computable is a difficult problem.

Besides, we further consider a general and natural case where the parameters are all real numbers. In recent decades, some researchers have studied the computability and the computational complexity of real numbers and real functions \cite{ref23,ref24,ref25,ref26}.
If there is a universal machine that could represent real numbers and do unit arithmetic or logical operation on them in finite time\cite{ref25}, then, again, if we can prove that there exists a deterministic algorithm that could solve the problem by finite arithmetic or logical operations on the real parameters, we may say this problem is weak computable.

 Based on our previous work on the three-dimensional rectangular packing problem \cite{ref20}, we conducted a preliminary study on this four-dimensional space-time packing and scheduling problem in \cite{ref8}.
 In this paper, we will present a formal mathematical description for this problem, and prove that there exists a deterministic algorithm that could solve this problem to optimality by finite operations.

\section{Problem description and computability analysis} This section presents a formal description of the four-dimensional space-time packing and scheduling problem and the main idea of its weak computability proof.

\subsection{Problem description}
The four-dimensional space-time packing and scheduling problem can be defined as follows.
In a three-dimensional Euclidean space, given a cuboid container with fixed length $ L $, width $ W $ and height $ H $,
and  given $ n $  ($n \in N^+$) cuboid items with each item $i$ in size ($l_i$, $w_i$, $h_i$), the time length $ T_i $ for each item to be processed (e.g. to be baked) is given too
 (here all parameters are positive real numbers).
The problem asks to provide a scheduling scheme to determine the following variables:

\begin{romannum}
\item the starting time $ S_i $ for item $i$ being placed into the container
 ( then the ending time of being taken out is $ S_i $ + $ T_i $);

\item the position and orientation of item $i$ at every instant of the time interval [$ S_i $, $ S_i $ + $ T_i $),which is called the survival period of item $i$. An item is called survival if it is in the container at the current time.
\end{romannum}

 The goal is to minimize the time that the last item being taken out from the container, namely, to minimize the makespan

 \centerline { max($ S_1 $ + $ T_1 $, $ S_2 $ + $ T_2 $, ..., $ S_n $ + $ T_n $)
 - min($ S_1 $, $ S_2 $, ..., $ S_n $)}

At any time of the schedule, every survival item should be located orthogonally and without protruding from the container, and there is no overlapping between any two survival items.

\subsection{Main idea for the computability proof}

If there exists an optimal schedule for the four-dimensional space-time packing and scheduling problem,
then naturally there must have an order for the items to be baked successively in this schedule.
Therefore, by enumerating all permutations of the items,
and by proving that the optimal solution can be found by a greedy strategy for each baking order,
we can prove that the solution with the shortest makespan among all solutions for different permutations is optimal for the original problem.

We prove the weak computability of this problem in three steps.
 First, we design an exact algorithm A$_0$ for the cuboid packing decision problem P$_0$.
 Then, by using A$_0$ as a core subprocedure, we design an exact algorithm A$_1$ for problem P$_1$, the four-dimensional packing and scheduling problem with order constraints.
 Last, by using A$_1$ as a core subprocedure, we design an exact algorithm A$_2$ for problem P$_2$, the original four-dimensional packing and scheduling problem.

\section{Computability on problem P$_0$}

\subsection{Problem description}

The cuboid packing decision problem P$_0$ can be defined as follows.
In a three-dimensional Euclidean space, given a cuboid container with fixed length $ L $, width $ W $ and height $ H $,
and  given $ n $  ($n \in N^+$) cuboid items with each item $i$ in size ($l_i$, $w_i$, $h_i$)
 (here all parameters are positive real numbers).
The problem asks whether there is a feasible placement that could place all the items into the container.
If so, then output the detailed layout.
 A placement is called feasible if each item in the container is located orthogonally (constraint 1) and without protruding (constraint 2), and there is no overlapping between any two items (constraint 3).

Therefore, any $3(n +1)$ positive real numbers $(L,W,H,l_1,w_1,h_1,...,l_n,w_n,$ $h_n)$
can uniquely identify a general and natural cuboid packing decision problem.

 It is reported in many recent literatures that the cuboid packing decision problem is NP-hard.
However, they usually refer to a special case that all parameters are integers.
To the best of our knowledge, for the general and natural case that all parameters are real numbers, there is no computability proof formally published thus far.

\subsection{Conceptions}
Consider the container be embedded into a three-dimensional Cartesian reference frame,
 in such a way that the lower-left-near corner coincides with the origin and
  the upper-right-far corner coincides with point $(L,W,H)$.
Follows are conceptions will be used in the follow-up proof.

\begin{definition}
 Object $i$ $(i\in\{0,1,...,n\})$
 Object 0 is constructed by all the points on the wall of the container and outside the container. Let item $i(i\in\{1,2,...,n\})$ be object $i$.
\end{definition}

\begin{definition}
 Orientation $ O_i $. There are six possible orientations $1, 2, ... ,6$
 for object $i(i\in\{1,2,...,n\})$, with its dimensions on x-, y-, z-axes being $(l_i, w_i, h_i)$,$(l_i, h_i, w_i)$, $(w_i, l_i, h_i)$,$(w_i, h_i, l_i)$,$(h_i, l_i, w_i)$ or $(h_i, w_i, l_i)$, respectively.
\end{definition}

\begin{definition}
Configuration. Define
$(x_1,y_1,z_1, ... , x_i,y_i,z_i, ... , x_n,y_n,z_n,$ $O_1,...,$ $O_i,...,O_n)$
 as a configuration, with $(x_i,y_i,z_i)$ being the coordinate of the lower-left-near vertex of object $i$ and $O_i$ being its orientation $(i\in\{1,2,...,n\})$.
\end{definition}

 Owing to the limits of the orientation, any configuration satisfies the first constraint of problem P$_0$ .

\begin{definition}
  Valid configuration. At the current configuration, define $V_{ij} $ as the intersection volume between object $i$ and $j$. A configuration is called valid if $V \triangleq \sum\limits_{i,j=0,i<j}^{n}V_{ij}=0$,
 namely, there is no intersection between any two of the $n+1$ objects.
\end{definition}

 For any given configuration, the value of $V$ is a definite real number.
 So $V$ is a function of $(x_1,y_1,z_1,...,x_n,y_n,z_n,O_1,...,O_n)$,
 whose domain is $((-\infty,\infty)^{3n}$ $\{1,...,6\}^n)$ and range$\subset[0,\infty)$.
 Constraint 2 is satisfied when there is no intersection between object $0$ and any other object.
 Constraint 3 is satisfied when there is no intersection between any two objects in $\{1,2,...,n\}$.
 Therefore, a configuration whose $V=0$ is a valid configuration that satisfies all the three constraints of problem P$_0$.

 \subsection{Computability proof} If there exists a feasible layout for problem P$_0$, then the position and orientation of each item is determined for this layout.
 Therefore, if we can enumerate the six different orientations for each item, that is to enumerate the $6^n$ different orientation settings for the $n$ items, and answer whether there exist a feasible layout for each of the $6^n$ orientation settings, then we can exactly solve problem P$_0$.
  So in this subsection, we first discuss the computability over a degeneration of P$_0$. The  degenerated problem, namely P$_0'$, has an additional constraint that the orientation of each item is fixed beforehand.

\vspace{3ex}
\begin{lemma}
 \label{lemma1}
 If the partial difference quotient of a multi-variable function $f(x_1,x_2,$ $...,x_m)$
is bounded($\leq K$), then $f$ is a continuous function. Here the partial difference quotient $\Delta f / \Delta x_i$ is defined as

 \centerline {$(f(x_1,...,x_i{''},...,x_m)-f(x_1,...,x_i',...,x_m))/(x_i'' - x_i').$}
\end{lemma}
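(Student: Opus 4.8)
The plan is to translate the hypothesis into a coordinate-wise Lipschitz condition and then propagate it along a path that changes one coordinate at a time. The boundedness of every partial difference quotient by $K$ says precisely that for each index $i$ and any fixing of the remaining coordinates,
\[
|f(x_1,\dots,x_i'',\dots,x_m) - f(x_1,\dots,x_i',\dots,x_m)| \leq K\,|x_i'' - x_i'|.
\]
In other words, $f$ is Lipschitz in each variable separately, with one common constant $K$ that does not depend on where the other coordinates sit.

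To establish continuity at an arbitrary point $a=(a_1,\dots,a_m)$, I would take any point $x=(x_1,\dots,x_m)$ and interpolate between $a$ and $x$ through a chain of intermediate points $p_0,p_1,\dots,p_m$, where $p_0=a$, $p_m=x$, and $p_k$ agrees with $x$ in its first $k$ coordinates and with $a$ in the remaining ones. Consecutive points $p_{k-1}$ and $p_k$ differ only in the $k$-th coordinate (from $a_k$ to $x_k$), so the coordinate-wise bound applies to each step and yields $|f(p_k)-f(p_{k-1})| \leq K\,|x_k - a_k|$. A telescoping sum together with the triangle inequality then gives
\[
|f(x)-f(a)| = \Big|\sum_{k=1}^{m}\big(f(p_k)-f(p_{k-1})\big)\Big| \leq K\sum_{k=1}^{m}|x_k-a_k|.
\]
Since the right-hand side tends to $0$ as $x\to a$, continuity at $a$ follows, and $a$ was arbitrary.

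The step to watch is the uniformity of the Lipschitz constant. Separate continuity in each variable, on its own, does not force joint continuity; the reason the argument closes is that the same $K$ controls every segment of the coordinate-wise path regardless of the values taken by the other coordinates. I expect no genuine technical obstacle beyond making this uniformity explicit, and if a quantitative conclusion is wanted one can read off a modulus of continuity directly: given $\varepsilon>0$, requiring $|x_k-a_k|<\varepsilon/(Km)$ for each $k$ forces $|f(x)-f(a)|<\varepsilon$.
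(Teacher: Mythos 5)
Your proposal is correct and follows essentially the same route as the paper: a telescoping decomposition along a path that changes one coordinate at a time, with each increment bounded by $K$ times the coordinate change. The only cosmetic difference is that the paper further bounds the resulting $\ell^1$ sum by $mK$ times the Euclidean distance, while you stop at the $\ell^1$ estimate, which suffices equally well.
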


\begin{proof}

\quad $|f(x_1+\Delta x_1,x_2+\Delta x_2, ...,x_{m-1}+\Delta x_{m-1},x_m+\Delta x_m)$
$- f(x_1,x_2, ...,x_{m-1},x_m)|$

$=|f(x_1+\Delta x_1, x_2+\Delta x_2, ...,x_{m-1}+\Delta x_{m-1},x_m+\Delta x_m)$

\quad\quad $- f(x_1+\Delta x_1,x_2+\Delta x_2, ...,x_{m-1}+\Delta x_{m-1},x_m)$

\quad $+ f(x_1+\Delta x_1,x_2+\Delta x_2, ...,x_{m-1}+\Delta x_{m-1},x_m)$
$ - f(x_1+\Delta x_1, x_2+\Delta x_2,...,x_{m-1},x_m)$

 \quad $+...$

\quad$+ f(x_1+\Delta x_1, x_2,...,x_{m-1},x_m) - f(x_1, x_2,...,x_{m-1},x_m)|$

$\leq |f(x_1+\Delta x_1, x_2+\Delta x_2, ...,x_{m-1}+\Delta x_{m-1},x_m+\Delta x_m)$

\quad\quad $- f(x_1+\Delta x_1,x_2+\Delta x_2, ...,x_{m-1}+\Delta x_{m-1},x_m)|$

\quad $+ |f(x_1+\Delta x_1,x_2+\Delta x_2, ...,x_{m-1}+\Delta x_{m-1},x_m)$
$ - f(x_1+\Delta x_1, x_2+\Delta x_2,...,x_{m-1},x_m)|$

 \quad $+...$

\quad$+ |f(x_1+\Delta x_1, x_2,...,x_{m-1},x_m) - f(x_1, x_2,...,x_{m-1},x_m)|$

$\leq K \cdot |\Delta x_m| + K \cdot |\Delta x_{m-1}|+ ... + K \cdot |\Delta x_2| + K \cdot |\Delta x_1|$

$\leq mK \sqrt {\Delta x_m^2 +\Delta x_{m-1}^2+...+\Delta x_2^2 +\Delta x_1^2 }$

As $\sqrt {\Delta x_m^2 +\Delta x_{m-1}^2+...+\Delta x_2^2 +\Delta x_1^2 }$ is the distance between point $(x_1+\Delta x_1,x_2+\Delta x_2, ...,x_{m-1}+\Delta x_{m-1},x_m+\Delta x_m)$ and point $(x_1,x_2, ...,x_{m-1},x_m)$, the absolute difference between the functions of the two points is less than or equal to a constant multiplying the distance between the two points.

So, function $f$ is continuous. Lemma \ref{lemma1} is proved. \qquad\end{proof}

 \vspace{3ex}
 \begin{lemma}
 \label{lemma2}
 If the orientation of each item is fixed beforehand, then $V$ is a continuous function of $(x_1,y_1,z_1, ... , x_n,y_n,z_n)$, and it is everywhere continuous on its entire domain $(-\infty,\infty)^{3n}$.
\end{lemma}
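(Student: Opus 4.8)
The plan is to reduce the claim to Lemma~\ref{lemma1} by showing that, with all orientations fixed, the total intersection volume $V$ has partial difference quotients bounded by a single constant $K$. Since $V=\sum_{0\le i<j\le n}V_{ij}$ is a finite sum, and a sum of functions with bounded partial difference quotients again has bounded partial difference quotient (the bound being the sum of the individual bounds), it suffices to bound the partial difference quotient of each pairwise intersection volume $V_{ij}$ separately. I would therefore analyze the $V_{ij}$ pair by pair.

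First I would treat a pair of genuine items $i,j\in\{1,\dots,n\}$. With orientations fixed, both objects are axis-aligned boxes with constant side lengths, say $\ell_i,\ell_j$ along the $x$-axis and similarly along $y$ and $z$. Their intersection is again an axis-aligned box, so $V_{ij}$ factors as a product of three one-dimensional overlap lengths, $V_{ij}=\phi(x_i,x_j)\,\psi(y_i,y_j)\,\chi(z_i,z_j)$, where for instance $\phi(x_i,x_j)=\max\left(0,\ \min(x_i+\ell_i,\,x_j+\ell_j)-\max(x_i,x_j)\right)$. The decisive structural fact is that each factor depends on a disjoint pair of variables. Hence the partial difference quotient of $V_{ij}$ in any single coordinate, say $x_i$, changes only the factor $\phi$ while $\psi$ and $\chi$ stay fixed; it therefore equals the difference quotient of $\phi$ in $x_i$ times the unchanged product $\psi\chi$. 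Each overlap length is a clamped piecewise-linear function whose slopes lie in $\{-1,0,1\}$, so its difference quotient is bounded by $1$ in absolute value, while each factor is itself bounded above by the corresponding smaller side length, a constant. Thus every partial difference quotient of $V_{ij}$ is bounded by a product of three constants.

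Next I would handle the pairs involving object $0$, the walls together with the exterior of the container, which is the only object that is not a plain box. Here I would argue by inclusion--exclusion: since the walls carry zero volume, $V_{0j}$ equals the protruding volume, namely the (constant) volume of item $j$ minus the volume of the part of item $j$ lying inside the closed container $C=[0,L]\times[0,W]\times[0,H]$. That second volume is the intersection of item $j$ with a fixed box, which factors exactly as before into a product of three clamped overlap lengths, now with one endpoint of each interval being a constant container boundary. The same estimate then bounds the partial difference quotient of $V_{0j}$.

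I expect the main obstacle to be the object-$0$ term rather than any of the estimates: because object $0$ is an unbounded, non-convex region, one cannot write its intersection with item $j$ as a single box, and it is precisely the inclusion--exclusion reduction to the bounded container box that makes the product factorization, and hence the difference-quotient bound, available. Once every $V_{ij}$ has a bounded partial difference quotient, summing the finitely many bounds produces a global constant $K$ for $V$; Lemma~\ref{lemma1} then yields that $V$ is continuous on all of $(-\infty,\infty)^{3n}$, which is exactly the assertion of Lemma~\ref{lemma2}.
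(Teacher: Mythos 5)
Your proof is correct, and its skeleton is the same as the paper's: bound the partial difference quotients of each pairwise term, sum over the finitely many pairs, and invoke Lemma~\ref{lemma1}. Where you differ is in how the bound is actually established, and your route is the more rigorous one. The paper works with a two-dimensional degeneration $S_{ij}$, reads the bound $|\Delta S_{12}/\Delta x_1|\le w_1$ off a figure showing item $1$ sliding past item $2$, and then disposes of the three-dimensional case and of the pairs involving object $0$ with ``similarly'' and ``an analogical reasoning.'' You instead work directly in three dimensions via the factorization $V_{ij}=\phi(x_i,x_j)\,\psi(y_i,y_j)\,\chi(z_i,z_j)$ into clamped one-dimensional overlaps on disjoint variable pairs, which reduces the bound to the $1$-Lipschitz property of $\max/\min$ expressions times constant upper bounds on the other two factors; and you give an explicit inclusion--exclusion treatment of $V_{0j}$ (item volume minus intersection with the container box), which is genuinely needed because object $0$ is unbounded and non-convex and the paper never addresses it. What the paper's version buys is geometric transparency (the swept-area picture makes the bound visually obvious); what yours buys is a complete argument covering exactly the two places where the paper's proof is only asserted, namely the passage from areas to volumes and the object-$0$ pairs. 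One cosmetic note: your per-pair bounds are products of side lengths rather than the single side length $K=\max(l_1,w_1,\dots)$ the paper uses, but since Lemma~\ref{lemma1} only requires some finite bound, this changes nothing.
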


\begin{proof}

Let us consider a two-dimensional degeneration of problem P$_0'$. Define
 $S_{ij} $ as the intersection area between object $i$ and $j$, and
 $S \triangleq \sum\limits_{i,j=0,i<j}^{n}S_{ij}$.
Now we prove that $S(x_1,y_1, ... , x_n,y_n)$ is everywhere continuous on its entire domain $(-\infty,\infty)^{2n}$.

Consider $S_{12}(x_1,y_1,x_2,y_2, ... , x_n,y_n)$.
Let item $2$ holds still and item $1$ only does the translation movement
 in the $x$ direction(Fig. \ref{fig1} illustrates two positions for item $1$ during its translation process).
That is to say, let the values of $y_1, x_2, y_2$ remain unchanged
and the value of $x_1$ varies within the scope of $(-\infty,\infty)$.

\begin{figure}[ht]
\begin{center} \includegraphics[width=260pt]{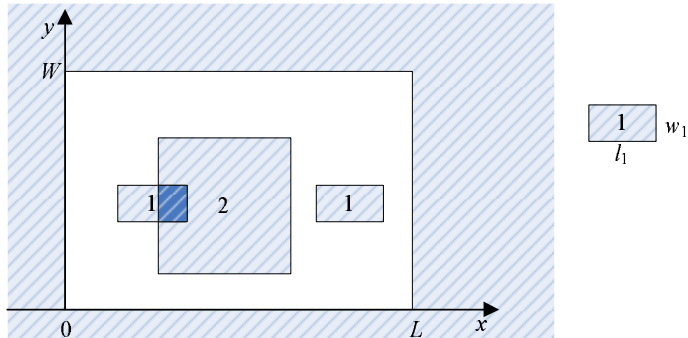} \end{center}
\caption{The relationship between $S_{12}$ and variable $x_1$.}
\label{fig1}
\end{figure}

As shown in Fig.~\ref{fig1}, the absolute value of the partial difference quotient

 \centerline {$|\Delta S_{12}/ \Delta x_1| = |\Delta S_{12}(x_1,$ $y_1,x_2,y_2,...,x_n,y_n) / \Delta x_1|$}

\noindent is everywhere less than or equal to $w_1$. Here $\Delta x_1 = x_1''-x_1'$ is a difference between any given real numbers $x_1''$ and $x_1'$ ($x_1'' \not= x_1'$). So, $|\Delta S_{12}/ \Delta x_1| \leq w_1$.


$ \because w_1 \leq  max(l_1,w_1,l_1,w_2,...,l_n,w_n) \triangleq K$,

$\therefore$ for any $\Delta x_1, |\Delta S_{12}/ \Delta x_1| \leq K$.

Similarly,  $|\Delta S_{12}/ \Delta y_1| \leq K$,
 $|\Delta S_{12}/ \Delta x_2| \leq K$,$|\Delta S_{12}/ \Delta y_2| \leq K$.

$ \because x_3,y_3,...,x_n,y_n$ are all dummy variables for $S_{12}$,

$\therefore |\Delta S_{12}/ \Delta x_3|$,$|\Delta S_{12}/ \Delta y_3| $,...,
 $|\Delta S_{12}/ \Delta x_n|$,$|\Delta S_{12}/ \Delta y_n| $ are all equal to 0, and so are less than or equal to $K$.

Therefore, for any point at the $2n$-dimensional Euclidean
 space $(-\infty,\infty)^{2n}$,
 the absolute value of the partial difference quotient for $S_{12}(x_1,y_1,x_2,y_2,...,x_n,y_n)$ on any of its variables is less than or equal to $K$.

 Then, according to Lemma~\ref{lemma1}, function $S_{12}(x_1,y_1,x_2,y_2,...,x_n,y_n)$  is everywhere continuous on its entire domain $(-\infty,\infty)^{2n}$.

 Similarly, $S_{ij}(x_1,y_1,x_2,y_2,...,x_n,y_n)$  ($i,j \in \{0,1,...,n\}, i \not= j$) is everywhere continuous on its entire domain $(-\infty,\infty)^{2n}$ .

 So, $S(x_1,y_1, ... , x_n,y_n) \triangleq \sum\limits_{i,j=0,i<j}^{n}S_{ij}$ is everywhere continuous on its entire domain $(-\infty,\infty)^{2n}$ .

 Similarly, an analogical reasoning can prove that
 $V(x_1,y_1,z_1, ... , x_n,y_n,z_n)$ is everywhere continuous on its entire domain $(-\infty,\infty)^{3n}$.

  Lemma \ref{lemma2} is proved. \qquad\end{proof}

\vspace{3ex}
\begin{definition}
Feasible point set $\mathscr{H}$.
Suppose $(x_1,y_1,z_1, ... , x_n,y_n,z_n)$ is a point in a $3n$-dimensional Euclidean space.
If the orientation of each item is fixed beforehand,
then the feasible point set
 $\mathscr{H}$ $\triangleq \{(x_1,y_1,z_1, ... , x_n,y_n,z_n)|$ $ V(x_1,y_1,z_1, ... , x_n,$ $y_n,z_n)=0\}$.
\end{definition}

For simplicity, let's take a look at a degeneration of P$_0'$,
 the two-dimensional rectangular packing decision problem $(L,W,l_1,w_1,...,l_n,w_n)$.
 Suppose the side of $l_i$ is limited to the $x$ direction for each item,
 and $n=1$,$l_1=w_1=1$.
 If $L<1$ or $W<1$, then there is no solution for the problem, $\mathscr{H}=\varnothing$;
 If $L=1$ and $W=1$, then there is a unique solution, $\mathscr{H}=\{(0,0)\}$;
 If $L>1$ and $W>1$, then there are infinite solutions, the feasible point set
 $\mathscr{H}=\{(x,y)|0\leq x \leq L-1,0\leq y \leq W-1\}$.

Then, we have the following speculation and the corresponding proof.

\vspace{3ex}
\begin{theorem}
\label{th1}
  If there exists a feasible solution for problem P$_{0}'(L,W,H,l_1,$ $w_1,h_1,...,l_n,w_n,h_n,O_1,...,O_n)$, then the corresponding feasible point set
$\mathscr{H}$ is a nonempty, closed and bounded set in a $3n$-dimensional Euclidean space.
\end{theorem}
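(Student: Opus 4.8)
The plan is to establish the three asserted properties of $\mathscr{H}$—nonempty, closed, and bounded—separately, drawing on Lemma~\ref{lemma2} for the topological claim and on a direct geometric observation for the metric claim. Throughout, with the orientations $O_1,\dots,O_n$ fixed, write $(a_i,b_i,c_i)$ for the edge lengths of item $i$ along the $x$-, $y$-, $z$-axes; by the definition of orientation this is a fixed permutation of $(l_i,w_i,h_i)$, so each $a_i,b_i,c_i$ is a known positive constant.

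First I would dispose of nonemptiness, which is immediate from the hypothesis: a feasible solution of $P_0'$ is by definition an assignment of positions satisfying all three constraints, and we have already observed that such a configuration is exactly one with $V=0$. The assumed feasible solution is therefore a point of $\mathscr{H}$, so $\mathscr{H}\neq\varnothing$. For closedness I would write $\mathscr{H}=V^{-1}(\{0\})$. By Lemma~\ref{lemma2}, $V$ is continuous on all of $(-\infty,\infty)^{3n}$; since $\{0\}$ is a closed subset of $(-\infty,\infty)$, its preimage under the continuous map $V$ is closed. This is the cleanest step, as Lemma~\ref{lemma2} does all the work.

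The substantive step is boundedness. The key observation is that $V=\sum_{i<j}V_{ij}$ is a sum of nonnegative terms, so $V=0$ forces every pairwise intersection volume to vanish; in particular $V_{0i}=0$ for each item $i$. Geometrically $V_{0i}=0$ says item $i$ has no volume lying on the wall or outside the container, i.e.\ item $i$ is confined to the closed box $[0,L]\times[0,W]\times[0,H]$. This confinement forces $0\le x_i\le L-a_i$, $0\le y_i\le W-b_i$, and $0\le z_i\le H-c_i$ for every $i$. Consequently $\mathscr{H}$ is contained in the product of these finitely many bounded intervals, which is a bounded region of $(-\infty,\infty)^{3n}$; hence $\mathscr{H}$ is bounded.

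I expect the main difficulty to be one of rigor rather than depth in the boundedness step: one must justify passing from the single equation $V=0$ to the individual conditions $V_{0i}=0$ (via nonnegativity of the summands) and then translate $V_{0i}=0$ into the coordinate constraints above, taking care that an item merely touching a wall contributes zero intersection volume and is therefore permitted, so the relevant bounding box is the closed box rather than the open interior. Combining the three parts, $\mathscr{H}$ is nonempty, closed, and bounded, which proves Theorem~\ref{th1}.
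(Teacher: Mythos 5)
Your proposal is correct and follows essentially the same route as the paper: nonemptiness from the hypothesis, boundedness from the geometric fact that $V=0$ confines each item to the container, and closedness from the continuity of $V$ established in Lemma~\ref{lemma2} (you phrase it as the preimage of the closed set $\{0\}$, the paper as closure under sequential limits, which is the same fact). Your boundedness step is in fact slightly more careful than the paper's, since you explicitly justify the passage from $V=0$ to $V_{0i}=0$ via nonnegativity of the summands.
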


\begin{proof}

\begin{romannum}
\item Non-emptiness:
 According to the assumption that there exists a feasible solution to the problem,
$\mathscr{H}$ is not empty.

\item Boundedness:
 Because any item should be placed into the container completely,
 for any feasible point $(x_1,y_1,z_1, ... , x_n,y_n,z_n)$,
$\forall i\in\{1,2,...,n\}$,
$0\leq x_i \leq L,0\leq y_i \leq W,0\leq z_i \leq H$,
namely, $\mathscr{H}$  is bounded.

\item Closedness:
\end{romannum}

Let $p_1,p_2,...$ be a sequence of points in set $\mathscr{H}$,
and $\displaystyle{\lim_{i\rightarrow\infty}}p_i = p^*$.

 According to Lemma \ref{lemma2},
 $V$ is a continuous function of $(x_1,y_1,z_1, ... , x_n,y_n,z_n)$.

And according to a property of the continuous function,
the limit of the function sequence equals the function of the sequence limit
\cite{ref22}.

So,  $\displaystyle{\lim_{i\rightarrow\infty}}V(p_i)=
V(\displaystyle{\lim_{i\rightarrow\infty}}p_i) = V(p^*)$.$ \quad\quad\quad\quad\quad\quad\quad\quad\quad\quad\quad\quad\quad\quad\quad (1)$

$\because $ $\forall i\in \{1,2,...\}$,$p_i$ is a feasible point, namely $V(p_i)=0$,

$\therefore \displaystyle{\lim_{i\rightarrow\infty}}V(p_i)=0$,
$\quad\quad\quad\quad\quad\quad\quad\quad\quad\quad\quad\quad\quad\quad\quad\quad\quad\quad\quad\quad\quad\quad\quad (2)$

According to Eq. (1) and (2),

$V(p^*)=V(\displaystyle{\lim_{i\rightarrow\infty}}p_i)=
\displaystyle{\lim_{i\rightarrow\infty}}V(p_i)=0$.

$\therefore $ $p^*$ is a feasible point, $p^* \in \mathscr{H}$.

$\therefore $ Set $\mathscr{H}$ is closed in limit operation.

$\therefore $ $\mathscr{H}$ a closed set\cite{ref22}.

In summary, if there exists a feasible solution for problem P$_{0}$ under the constraint that the orientation of each item is fixed, then $\mathscr{H}$ is a nonempty, closed and bounded set in a $3n$-dimensional Euclidean space.

Theorem \ref{th1} is proved.\qquad\end{proof}

\vspace{3ex}
\begin{theorem}
\label{th2}
If there exists a feasible solution for problem P$_{0}'(L,W,H,l_1,$ $w_1,h_1,...,l_n,w_n,h_n,O_1,...,O_n)$,
 then there is a feasible point $p^*$ in $\mathscr{H}$,
 and in the corresponding layout of $p^*$, each item occupies a lower-left-near corner formed by other objects.
\end{theorem}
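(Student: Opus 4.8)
The plan is to obtain $p^*$ as the minimizer of a continuous potential that rewards pushing every item toward the origin, and then to extract the corner property directly from minimality. First I would apply Theorem~\ref{th1}: since a feasible solution is assumed to exist, the feasible point set $\mathscr{H}$ is nonempty, closed and bounded, hence compact in $(-\infty,\infty)^{3n}$. On $\mathscr{H}$ define the everywhere-continuous (indeed linear) potential
\[
\Phi(x_1,y_1,z_1,\ldots,x_n,y_n,z_n)=\sum_{i=1}^{n}(x_i+y_i+z_i).
\]
By the extreme-value theorem a continuous function attains its minimum on a compact set, so there is a point $p^*\in\mathscr{H}$ at which $\Phi$ is minimal; because $p^*\in\mathscr{H}$ we have $V(p^*)=0$, so $p^*$ is a genuine feasible point.

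Next I would establish the corner property at $p^*$ by a one-coordinate perturbation. Fix an arbitrary item $i$ and consider decreasing its $x$-coordinate alone, replacing $x_i$ by $x_i-\varepsilon$ while freezing the other $3n-1$ coordinates. If this perturbed configuration remained feasible (still in $\mathscr{H}$) for some $\varepsilon>0$, then $\Phi$ would strictly drop by $\varepsilon$, contradicting the minimality of $p^*$. Therefore no leftward shift of item $i$ is feasible: every $\varepsilon>0$ forces the box of item $i$ to overlap the interior of some other object in positive volume. Running the identical argument for $y_i$ and for $z_i$ shows that item $i$ likewise cannot be moved in the $-y$ (near) or $-z$ (down) direction. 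Thus at $p^*$ each item is blocked from below along all three axes simultaneously.

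It then remains to read this off as the geometric corner statement. Being blocked in the $-x$ direction means that either $x_i=0$, so item $i$ abuts the left wall of the container, which is part of object~$0$, or some other object has its right face flush with the left face of item $i$ and shares a positive-area contact patch at the current $(y_i,z_i)$-location; in either case a single object supports item $i$ from the left. Applying the same dichotomy to the near face and to the bottom face produces three supporting objects, each being either object~$0$ or some item $j\neq i$, and together these objects form precisely a lower-left-near corner into which item $i$ is seated. As $i$ was arbitrary, every item occupies such a corner at $p^*$, which is exactly the assertion of the theorem.

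The step I expect to demand the most care is the passage in the last paragraph from ``no leftward shift is feasible'' to ``a blocking object touches the left face over a set of positive area.'' The point is that a mere edge or vertex contact, a touch of zero area, would still leave room to slide item $i$ a little to the left and so could not obstruct the decrease of $x_i$; only positive-area face contact, or the wall at $x_i=0$, can. Making this rigorous calls for a short finite case analysis over the $n$ box-objects together with object~$0$, checking for each whether shrinking the admissible shift to zero keeps the would-be overlap nonempty, with the continuity of $V$ from Lemma~\ref{lemma2} underwriting the limiting argument. By contrast, the existence of the minimizer and the perturbation contradiction are routine once compactness is secured.
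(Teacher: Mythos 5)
Your proof follows essentially the same route as the paper's: the same potential $\sum_{i=1}^{n}(x_i+y_i+z_i)$ minimized over the compact set $\mathscr{H}$ supplied by Theorem~\ref{th1}, with the corner property extracted from the impossibility of feasibly decreasing any single coordinate at the minimizer. Your closing discussion of why the blocking must come from a positive-area face contact (or the container wall) is actually more explicit than the paper, which simply asserts that being unable to lower $x_k^*,y_k^*,z_k^*$ independently means item $k$ occupies a lower-left-near corner.
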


\begin{proof}

Define a potential energy function $U \triangleq \sum\limits_{i=1}^{n} (x_i+y_i+z_i)$,
whose domain is $\mathscr{H}$ and range$\subset[0,\infty)$. Note that $U$ is a continuous function.

As there must exist a global minimum point for a continuous function defined on a nonempty, closed and bounded set \cite{ref22},
 and according to Theorem \ref{th1}, so there must exist a global minimum point $p^*$ for $U$ defined on $\mathscr{H}$.

Consider a configuration $(x_1^*,y_1^*,z_1^*,...x_n^*,y_n^*,z_n^*)$ corresponding to point $p^*$, and consider $x_k^*,y_k^*,z_k^*$ for any item $k\in\{1,...,n\}$. At the current time, $x_k^*,y_k^*,z_k^*$ can no longer have smaller values independently and feasibly, or else it would be contradict with the conclusion that $p^*$ is a global minimum point for the potential energy $U$.
$x_k^*,y_k^*,z_k^*$ can no longer have smaller values independently and feasibly,
indicating that item $k$ has occupied a lower-left-near corner formed by other objects.

Theorem \ref{th2} is proved.\qquad\end{proof}

\vspace{3ex}
\begin{theorem}
\label{th3}
Problem P$_{0}'(L,W,H,l_1,w_1,h_1,...,l_n,w_n,h_n,O_1,...,O_n)$ is weak computable.
\end{theorem}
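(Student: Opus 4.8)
The plan is to convert the existence question for a feasible point of $\mathscr{H}$ into a search over a \emph{finite}, explicitly describable list of candidate configurations, exploiting the corner structure guaranteed by Theorem~\ref{th2}, and then to decide each candidate by evaluating $V$ through finitely many arithmetic operations and comparisons on the real parameters. Since the orientations $O_1,\dots,O_n$ are fixed throughout, I would write $a_i,b_i,c_i$ for the side lengths of item $i$ along the $x$-, $y$- and $z$-axes; these are among the given positive reals.

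The first and central step is a discretization observation extracted from the proof of Theorem~\ref{th2}, which asserts that at the corner point $p^*$ none of $x_k^*,y_k^*,z_k^*$ can be decreased independently while remaining feasible. Sliding item $k$ along $x$ with all other coordinates held fixed, the set of non-overlapping positions is a union of intervals whose endpoints occur precisely where a near face of item $k$ meets a far face of another object; hence the obstructed value $x_k^*$ must equal either $0$ (contact with the left wall, which is part of object $0$) or $x_j^* + a_j$ for some item $j \neq k$ whose far $x$-face touches the near $x$-face of item $k$. The same reasoning in the remaining two directions gives $y_k^* \in \{0\} \cup \{y_j^* + b_j : j \neq k\}$ and $z_k^* \in \{0\} \cup \{z_j^* + c_j : j \neq k\}$.

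Next I would turn this recursive description into an explicit finite set. Following the chain $x_k^* = x_{j_1}^* + a_{j_1} = x_{j_2}^* + a_{j_2} + a_{j_1} = \cdots$ backwards, each step strictly decreases the $x$-coordinate because every $a_j>0$, so the items appearing along the chain are pairwise distinct and the chain must terminate, after at most $n$ steps, at an item resting against the wall. Consequently $x_k^*$ equals a sum of the $a_j$ over a subset of $\{1,\dots,n\}$, so $x_k^*$ lies in the finite set $C^x \triangleq \{\sum_{i \in S} a_i : S \subseteq \{1,\dots,n\}\}$, which has at most $2^n$ elements; likewise every $y_k^* \in C^y$ and $z_k^* \in C^z$, the analogous subset-sum sets. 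Thus the corner configuration $p^*$ lies in the finite grid $(C^x \times C^y \times C^z)^{n}$. The algorithm A$_0'$ then enumerates this grid: for each candidate it forms every pairwise intersection volume $V_{ij}$ as a product of clamped axis overlaps, where for instance the $x$-overlap is $\max(0, \min(x_i+a_i, x_j+a_j) - \max(x_i,x_j))$, adds the checks against object $0$ encoding non-protrusion (the inequalities $0 \le x_i$, $x_i + a_i \le L$, and their $y,z$ analogues), forms $V$, and tests whether $V = 0$. Each test uses only finitely many additions, subtractions, multiplications, maxima, minima and comparisons on the real inputs. By Theorem~\ref{th2}, if a feasible solution exists then some enumerated candidate satisfies $V=0$, while if none does then no feasible solution exists; hence A$_0'$ decides P$_0'$ and exhibits a witnessing layout when one exists, using finitely many operations, which is precisely weak computability.

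The main obstacle is the discretization of the second paragraph: one must argue rigorously that the ``blocked in every negative direction'' condition of Theorem~\ref{th2} forces each coordinate onto the finite subset-sum set, rather than merely asserting that the item touches \emph{something}. The two delicate points are that being unable to decrease $x_k^*$ pins its value exactly to the wall or to a specific neighbouring far-face rather than to some intermediate position, and that the support chains cannot cycle, which is what renders the candidate set finite. Both follow from the positivity of the side lengths and the interval structure of the one-dimensional overlap constraint, but they are the crux of the argument and deserve careful statement.
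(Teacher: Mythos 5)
Your proposal is correct and follows essentially the same route as the paper: both use the corner point $p^*$ from Theorem~\ref{th2} to trace support chains down to object $0$, conclude that each coordinate lies in a finite subset-sum set of the side lengths, and then enumerate and test the resulting finite grid of candidate layouts. Your version is in fact slightly sharper (subset sums give at most $2^n$ values per coordinate since orientations are fixed, versus the paper's $4^n$ bound) and more explicit about why the chains terminate and how the feasibility test reduces to finitely many arithmetic operations, but the underlying argument is the same.
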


\begin{proof}

According to Theorem \ref{th2}, if there exists a feasible solution for the problem,
 then there is a feasible point $p^*$ in $\mathscr{H}$,
  and each item occupies a lower-left-near corner formed by other objects at the corresponding layout of $p^*$.

Thus for any item $i$ of this layout, it must paste with another item $i_1$ on its bottom,
 and $i_1$ must paste with another item $i_2$ on its bottom, ..., and $i_k$ must paste with the x-axis surface of object $0$ on its bottom. The relationships of $i$, $i_1,i_2, ..., i_k$ and object $0$ are as shown in Fig. \ref{fig2}.

$\therefore$ The value of coordinate $z_i$ for point $p^*$ is:

\quad $z_i = z_{i1}+(l_{i1}$ or $ w_{i1} $ or $ h_{i1})$

\quad \quad $= z_{i2}+(l_{i1}$ or $ w_{i1} $ or $ h_{i1})+(l_{i2}$ or $ w_{i2} $ or $ h_{i2})$

\quad \quad $=...$

\quad \quad $=(l_{i1}$ or $ w_{i1} $ or $ h_{i1})+...+ (l_{ik}$ or $ w_{ik} $ or $ h_{ik})$

$\because$ $\{i_1,i_2,...,i_k\} \in \{1,...,n\}$

$\therefore$ $z_i = (0$ or $1)(l_{1}$ or $ w_{1} $ or $ h_{1})$
                  $ + ... $
                  $ + (0$ or $1)(l_{n}$ or $ w_{n} $ or $ h_{n})$

\quad \quad $= (0$ or $l_{1}$ or $ w_{1} $ or $ h_{1})$
                  $ + ... $
                  $ + (0$ or $l_{n}$ or $ w_{n} $ or $ h_{n})$

$\therefore$ (the number for different $z_i$) $ \leq  4^n$  $(i = 1,2,...,n) $

\begin{figure}[ht]
\begin{center} \includegraphics[width=280pt]{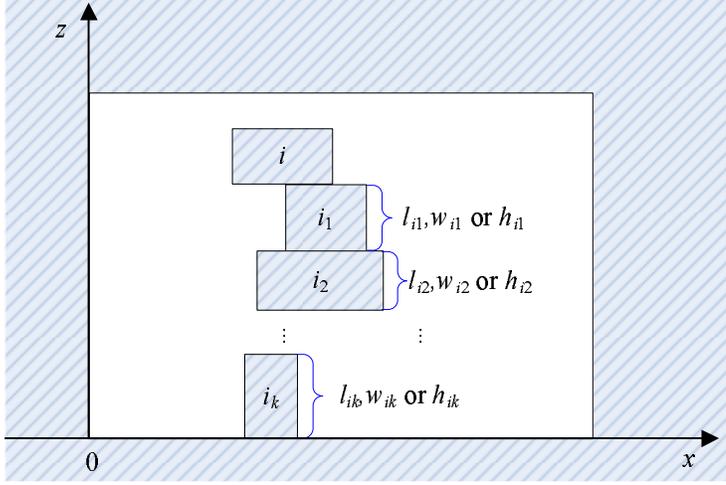} \end{center}
\caption{ The $z-$coordinate of item $i$ for the feasible point $p^*$.}
\label{fig2}
\end{figure}

Similarly, the number for different $x_i$ or $y_i$ both $ \leq  4^n$  $(i = 1,2,...,n) $

$\therefore$  The number for different $(x_1,y_1,z_1, ... , x_n,y_n,z_n)$
$ \leq   (4^n)^{3n} = 4^{3n^2}$ for solution $p^*$.

$\therefore$ By enumerating the $4^{3n^2}$ different layouts and by checking their feasibility one by one, we can exactly solve problem P$_0'$ by finite arithmetic or logical operations.

So, P$_0'$ is weak computable. Theorem \ref{th3} is proved.\qquad\end{proof}

\vspace{3ex}
\begin{theorem}
\label{th4}
The cuboid packing decision problem P$_{0}$ is weak computable.
\end{theorem}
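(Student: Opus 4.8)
The plan is to reduce problem P$_0$ to finitely many instances of the already-settled problem P$_0'$. The crucial observation, anticipated at the start of the computability subsection, is that P$_0$ differs from P$_0'$ only in that the orientation $O_i$ of each item is a free choice rather than being fixed beforehand. By the definition of orientation, each item admits exactly six orientations, so the $n$ items together admit exactly $6^n$ distinct orientation settings $(O_1,...,O_n)$, and each such setting determines one instance of P$_0'(L,W,H,l_1,w_1,h_1,...,l_n,w_n,h_n,O_1,...,O_n)$.

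First I would establish the logical equivalence: the instance P$_0(L,W,H,l_1,w_1,h_1,...,l_n,w_n,h_n)$ admits a feasible layout if and only if at least one of the $6^n$ associated P$_0'$ instances admits a feasible layout. The forward direction holds because any feasible placement for P$_0$ necessarily assigns each item a definite orientation, so fixing those orientations produces a feasible placement for the corresponding P$_0'$ instance. The reverse direction is immediate, since a feasible layout for any P$_0'$ instance already satisfies all three constraints of P$_0$ (the extra orientation constraint of P$_0'$ only narrows, never enlarges, the feasible set).

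Then, invoking Theorem \ref{th3}, each individual P$_0'$ instance is weak computable: its feasibility can be decided, and a layout output when feasible, by finitely many arithmetic or logical operations on the real parameters. Running this deciding procedure on all $6^n$ orientation settings and reporting ``feasible'' the moment any one call succeeds therefore decides P$_0$ using at most $6^n$ times the finite operation count of a single P$_0'$ call. I expect no genuine obstacle here beyond what was already overcome in Theorem \ref{th3}; the only point meriting care is the finiteness bookkeeping, namely that multiplying the finite per-instance operation count by the finite factor $6^n$ leaves the total operation count finite. With that verified, an exact algorithm A$_0$ for P$_0$ is obtained by looping A$_0'$ over the $6^n$ orientation settings, and P$_0$ is weak computable.
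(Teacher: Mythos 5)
Your proposal is correct and follows essentially the same route as the paper: enumerate the $6^n$ orientation settings, invoke the Theorem \ref{th3} procedure for P$_0'$ on each, and answer ``feasible'' iff some setting succeeds, for a total of $6^n\cdot 4^{3n^2}$ finitely many operations. Your explicit two-direction equivalence argument is a slightly more careful writeup of what the paper leaves implicit, but it is not a different proof.
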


\begin{proof}

Design an enumeration algorithm A$_0$ for the cuboid packing decision problem basing on the following strategies.

For problem P$_{0}(L,W,H,l_1,w_1,h_1,...,l_n,w_n,h_n)$,
there are a total of $n$ cuboid items, and there are six orientations for each item,
so there are a total of $6^n$ set of orientation configurations.

According to Theorem \ref{th3}, for each set of orientation configuration,
we can exactly solve the problem by enumerating the $4^{3n^2}$ different layouts.

Therefore, we can exactly solve problem P$_0$ in $6^n\cdot 4^{3n^2}$ iteration of computations.

So, P$_{0}$ is weak computable. Theorem \ref{th4} is proved.\qquad\end{proof}

 \section{Computability on problem P$_1$}
 In this section, we discuss problem P$_1$,
 the four-dimensional packing and scheduling problem with order constraints.
 The container can be regarded as a cuboid oven.
 Problem P$_1$ is a degeneration of the original problem P$_2$,
 and P$_1$ has an additional constraint
 that the items should be baked in the ascending order of their serial number.
 In P$_1$, the starting time $S_i$ of item $i$ should be less than or equal to
 the starting time $S_{i+1}$ of item $i+1$ $(i \in \{1,...,n-1\})$.
 In this section, we first present a greedy scheduling algorithm A$_1$ for problem P$_1$,
 and then prove its optimality.

\subsection{The greedy scheduling algorithm}
Without loss of generality, we assume that any single item could be completely placed into the oven (larger items can be deleted from the baking sequence beforehand).
 For a P$_1$ instance,
 algorithm A$_1$ first finds a positive integer $k$ such that the first $k$ items can be all placed into the oven by algorithm A$_0 $ but the first $k+1$ items can not.
 Then it places the first $k$ items into the oven by A$_0$ and
  bake them continuously until at least one of the items completes its baking work.
 Next, take out the completed one or more items,
 and move the remaining items in the oven to the baking sequence with their left baking time
 and in the original order.
 Now the problem transforms into a new instance of problem P$_1$ again.
 Above procedure is call an iteration,
and such kind of iterations repeat until all the items complete their baking work.

\vspace{3ex}
 \textbf {Algorithm A$_1$.}

0.Initialization:

 \quad $C = (L,W,H)$,the cuboid oven;

 \quad $B = <(l_1,w_1,h_1),(l_2,w_2,h_2),...,(l_n,w_n,h_n)>$,the cuboid item sequence;

 \quad $F = <T_1,T_2,...,T_n>$,the remaining baking time;

 \quad $j = 1$,the time number;

 \quad $t_j = 0$,the initial time;

 \quad $R = \varnothing$,items already in the container at the current time;

 \quad $k = 0$,the maximum number of items already baked or are baking.

1. At time $t_j$ :

1.1 Find the maximum $R$ set:

 \quad\quad do\{

\quad\quad\quad  $k=k+1$;

\quad\quad\quad  $R = R \cup \{$ the $k$-th item in $B \}$;

\quad\quad\quad  Pseudo place all the items in $R$ by algorithm A$_0$ (Note that there

\quad\quad\quad\quad  is no constraints for the placing order);

\quad\quad \}until the items in $R$ can no longer be all placed into the container

\quad\quad\quad  or $k = n$.

1.2 If the items in $R$ can not be all placed into the container by A$_0$ \{

\quad\quad \quad  $R = R - \{$ the $k$-th item in $B \}$;

\quad\quad \quad  $k=k-1$;

\quad\quad\quad Place the items in $R$ into the container by A$_0$ and bake them;

\quad\quad\quad The time length for the continuous bake  $ \Delta t_j \triangleq min\{T_i|i \in R\}$.

\quad\quad \}else \{ \quad //$k = n$,indicating that there is no items left

\quad\quad\quad Place the items in $R$ into the container by A$_0$ and bake them;

\quad\quad\quad Denote the latest time of the items being taken out by $t_{out}$;

\quad\quad\quad Exit.

\quad\quad \}

2. Preparations for time $t_{j+1}=t_j+\Delta t_j$:

\quad \ Update the remaining baking time for the items in $R$:

\quad\quad\quad For each $i \in R$,$T_i = T_i - \Delta t_j$;

\quad \ Update the $R$ set:

\quad\quad\quad Take out items whose remaining baking time is 0

\quad\quad\quad\quad from the container,and remove them from $R$;

\quad \ Update the time number: $j = j+1$;

\quad \ Return 1.

\vspace{3ex}

The corresponding greedy scheduling process is as shown in Fig. \ref{fig3}.
Here $t_j(j \in \{1,...,M\})$ corresponds to the entering time of the $j$-th batch of items,
 $t_j$ to $t_{j+1}(j \in \{1,...,M-1\})$ corresponds to beat $j$,
and the time length of this beat is $\Delta t_j$.
 At time $t_M$, all the remaining outside items have been placed into the oven, and we just keep taking out items that reach their baking time until all the items have completed their baking work.
 We represent the leaving time for the last takeout item by $t_{out}$, then the total time length for the whole baking process is $t_{out} - t_1= t_{out}$.

\begin{figure}[ht]
\begin{center} \includegraphics[width=380pt]{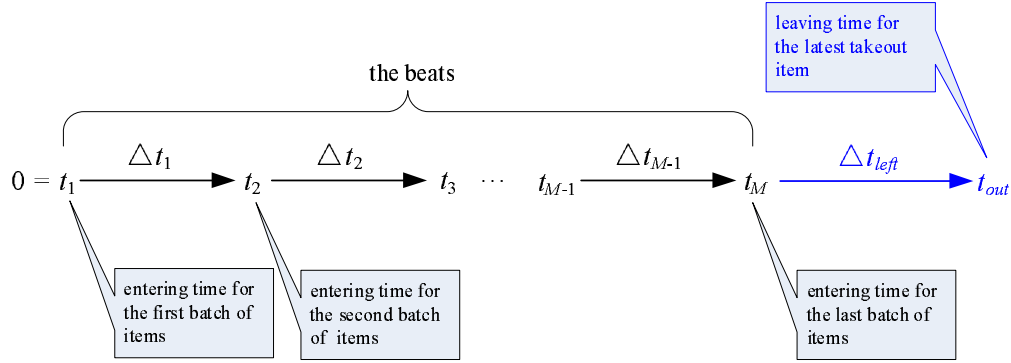} \end{center}
\caption{The greedy scheduling process.}
\label{fig3}
\end{figure}

\subsection{Computability proof}

This subsection combines two methods, the principle of mathematical induction and the reductio ad absurdum, to do the computability proof for problem P$_1$.

\vspace{3ex}
\begin{theorem}
\label{th5}
For problem P$_1$, suppose in the scheduling $\mathscr{S}$ achieved by algorithm A$_1$, the starting times of the items in the baking sequence are
$0=S_1 \leq S_2 \leq ... \leq S_{n-1} \leq S_n$,
then for any feasible scheduling $\mathscr{S'}$ that
$0 \leq S_1' \leq S_2' \leq ... \leq S_{n-1}' \leq S_n'$,
there must be $S_1 \leq S_1',S_2 \leq S_2',...,S_n \leq S_n'$.
\end{theorem}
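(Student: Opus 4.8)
The plan is to prove the statement by the principle of mathematical induction on the item index $m$, establishing $S_m \leq S_m'$ for every $m \in \{1,\dots,n\}$, with the inductive step carried out by reductio ad absurdum. For the base case, item $1$ enters the oven at the first beat $t_1 = 0$ of the greedy schedule (it fits by the standing assumption that any single item can be placed into the oven), so $S_1 = 0 \leq S_1'$ for any feasible $\mathscr{S'}$.

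For the inductive step, I would assume $S_i \leq S_i'$ for all $i < m$ and suppose, toward a contradiction, that $S_m > S_m'$. The first key ingredient is a monotonicity observation: since each item bakes for a fixed uninterrupted length $T_i$ and $S_i \leq S_i'$, its greedy finishing time obeys $S_i + T_i \leq S_i' + T_i$. Combined with the order constraint, which forces the collection of items already started at any instant to be a prefix $\{1,\dots,p\}$ in both schedules, this yields the central containment: at the instant $t = S_m'$, every item still baking in $\mathscr{S}$ is also still baking in $\mathscr{S'}$, and $\mathscr{S'}$ additionally has item $m$ present. Writing $R_0$ for the set of items present in $\mathscr{S}$ at time $S_m'$, it follows that $R_0 \cup \{m\}$ is contained in the set of items present in $\mathscr{S'}$ at that same instant.

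Next I would exploit feasibility of $\mathscr{S'}$: at time $S_m'$ all items present in $\mathscr{S'}$ sit inside the oven without protrusion and without overlap, so the restriction of that layout to the subset $R_0 \cup \{m\}$ is itself a valid placement; equivalently, A$_0$ succeeds in placing $R_0 \cup \{m\}$. I would then connect this to the greedy decision at a beat. Let $t_j$ be the beat of $\mathscr{S}$ with $t_j \leq S_m' < t_{j+1}$. Since $S_{m-1} \leq S_{m-1}' \leq S_m'$ and $S_{m-1}$ is itself a beat, items $1,\dots,m-1$ have all entered by $t_j$, whereas item $m$ has not, because its entry time $S_m$ is a beat exceeding $S_m'$ and hence $S_m \geq t_{j+1}$. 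As the set of items present in $\mathscr{S}$ is constant on $[t_j, t_{j+1})$, this present set equals $R_0$; and because item $m$ is the next item in the fixed baking order, A$_1$ did attempt to append item $m$ to $R_0$ at beat $t_j$ and found that A$_0$ could not place $R_0 \cup \{m\}$. This contradicts the conclusion of the previous step, so $S_m \leq S_m'$ and the induction closes.

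I expect the \emph{main obstacle} to be the bookkeeping caused by items leaving the oven when they finish: the sets of items present in $\mathscr{S}$ and in $\mathscr{S'}$ at time $S_m'$ are genuinely different in general, so one cannot naively assert that whatever $\mathscr{S'}$ fits, $\mathscr{S}$ fits. The containment of $R_0 \cup \{m\}$ inside the present set of $\mathscr{S'}$ is precisely what rescues the argument, and it hinges on the finishing-time monotonicity $S_i + T_i \leq S_i' + T_i$ together with the prefix structure imposed by the order constraint. A secondary point requiring care is that A$_1$ only re-evaluates placement at beat times, which is why I align $S_m'$ with the enclosing beat interval $[t_j, t_{j+1})$ and rely on the constancy of the present set there to equate the configuration greedy refused to extend with the configuration $\mathscr{S'}$ feasibly realizes.
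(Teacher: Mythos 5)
Your proposal is correct and follows essentially the same route as the paper's proof: induction on the item index with a reductio ad absurdum in the inductive step, the finishing-time monotonicity $S_i+T_i\leq S_i'+T_i$ yielding the containment of the greedy schedule's surviving set (plus item $m$) inside the surviving set of $\mathscr{S'}$ at time $S_m'$, and the exactness of A$_0$ supplying the contradiction. Your extra bookkeeping about the beat interval $[t_j,t_{j+1})$ and the restriction of the feasible layout of $\mathscr{S'}$ to the subset $R_0\cup\{m\}$ only makes explicit steps the paper leaves implicit.
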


\begin{proof}
The proof is mainly based on the principle of mathematical induction:
\begin{romannum}

\item For $k$=1, $S_1 = 0$,$S_1' \geq 0$, so $S_1 \leq S_1'$.

\item Suppose for any given $k$ $(k < n)$, $S_k \leq S_k'$, and we need to prove  $S_{k+1}\leq S_{k+1}'$. Now we prove it by contradiction and assume $S_{k+1} > S_{k+1}'$.
\end{romannum}

 \quad  $\because$ $S_1\leq ... \leq S_k \leq S_k' \leq S_{k+1}'$,$S_1'\leq ... \leq S_k' \leq S_{k+1}'$

 \quad  $\therefore$ For both $\mathscr{S}$ and $\mathscr{S'}$, any item in $\{1,2,...,k\}$ has been baked or is baking in the container at time $S_{k+1}'$.

 \quad  $\because$  $S_1 \leq S_1',S_2 \leq S_2',...,S_k \leq S_k'$

 \quad  $\therefore$  $S_1+T_1 \leq S_1'+T_1,S_2+T_2 \leq S_2'+T_2,...,S_k+T_k \leq S_k'+T_k$

 \quad  That is to say, for any item $i$ $(i \in \{1,...,k\})$, its leaving time in $\mathscr{S}$ is less than or equal to that in $\mathscr{S'}$ .

  \quad  $\therefore$ At time $S_{k+1}'$, the remaining items of $\{1,...,k\}$ in the container for $\mathscr{S}$ are also the remaining items of $\{1,...,k\}$ in the container for $\mathscr{S'}$.

  \quad  $\therefore$ At time $S_{k+1}'$, for items $\{1,...,k\}$, the remaining item set $U$ for $\mathscr{S}$, is a subset of the remaining item set $U'$ for $\mathscr{S'}$.

  \quad According to the assumption, at time $S_{k+1}'$, item $k+1$ is placed into the container in $\mathscr{S'}$, but it can not be placed into the container in $\mathscr{S}$.

 \quad It indicates that, at time $S_{k+1}'$, another schedule can place $U' \cup \{i+1\} $ into the container while A$_1$ can not place $U \cup \{i+1\}$ into the container,
 and $U \cup \{i+1\} \subset U' \cup \{i+1\}$.
 This is contradict with Theorem \ref{th4} that A$_0$ can solve problem P$_0$ to optimality by finite operations.

  \quad  $\therefore S_{k+1}\leq S_{k+1}'$.

  According to (1) and (2), $S_1 \leq S_1',S_2 \leq S_2',...,S_n \leq S_n'$.

 Theorem \ref{th5} is proved.\qquad\end{proof}

 \vspace{3ex}
\begin{theorem}
\label{th6}
The scheduling result achieved by algorithm A$_1$ is an optimal solution for problem P$ _1$.
\end{theorem}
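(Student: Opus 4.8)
The plan is to derive the optimality of A$_1$ as an almost immediate consequence of Theorem~\ref{th5}, which already establishes that the schedule $\mathscr{S}$ produced by A$_1$ makes every starting time as small as possible among all order-respecting feasible schedules. The only remaining work is to convert ``componentwise-smallest starting times'' into ``smallest makespan'', and the single point that requires care is the $-\min_i S_i$ term in the definition of the makespan.

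First I would record that $\mathscr{S}$ is itself a feasible, order-respecting schedule with $S_1 = 0$, so that $\min_i S_i = 0$ and its makespan is exactly $\max_i (S_i + T_i)$. This follows from the construction of A$_1$, in which items enter in batches by increasing index so that $0 = S_1 \leq S_2 \leq \cdots \leq S_n$, together with Theorem~\ref{th4} guaranteeing that the subroutine A$_0$ always produces a valid placement for each batch. Next I would take an arbitrary feasible order-respecting competitor $\mathscr{S'}$ with starting times $S_1', \ldots, S_n'$ and makespan $\max_i(S_i' + T_i) - \min_i S_i'$.

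The key step is a translation-normalization. Since the makespan is invariant under a uniform shift of all starting times, I may replace $\mathscr{S'}$ by the schedule obtained by subtracting $\min_i S_i' = S_1'$ from every $S_i'$; this is again feasible, still respects the order constraint, has smallest start equal to $0$, and has the same makespan. Theorem~\ref{th5} then applies to this normalized schedule and yields $S_i \leq S_i'$ for every $i$. Adding $T_i$ and taking the maximum over $i$ gives $\max_i(S_i + T_i) \leq \max_i(S_i' + T_i)$, i.e. the makespan of $\mathscr{S}$ is no larger than that of $\mathscr{S'}$. As $\mathscr{S'}$ was arbitrary, $\mathscr{S}$ is optimal.

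The main obstacle is essentially nonexistent once Theorem~\ref{th5} is in hand: the heavy lifting, namely that no order-respecting feasible schedule can start any item strictly earlier than A$_1$ does, rests on the optimality of the packing subroutine A$_0$ (Theorem~\ref{th4}) and was already carried out in Theorem~\ref{th5}. The only genuine subtlety to watch is the translation-normalization step, which is what allows Theorem~\ref{th5} (stated for competitors with $S_1' \geq 0$) to be invoked after reducing to $\min_i S_i' = 0$, and which is needed precisely because the objective subtracts the earliest starting time rather than assuming it equals zero.
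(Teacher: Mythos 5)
Your proposal is correct and follows essentially the same route as the paper: normalize the competitor so that its earliest start is $0$ (the paper phrases this as ``we may assume $S_1'=0$, or else we can delete $[0,S_1')$''), invoke Theorem~\ref{th5} to get $S_i \leq S_i'$ componentwise, add $T_i$, take the maximum, and note both minima equal zero. Your explicit attention to the translation-invariance of the makespan is a slightly more careful rendering of the same normalization step the paper uses.
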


\begin{proof}

For any feasible schedule $0 \leq S_1' \leq S_2' \leq ... \leq S_{n-1}' \leq S_n'$,
we may assume $S_1'=0$, or else we can delete $[0,S_1')$ to get a shorter schedule.

According to Theorem \ref{th5},
for the schedule $0=S_1 \leq S_2 \leq ... \leq S_{n-1} \leq S_n$ achieved by A$_1$ and
 any feasible schedule $0 \leq S_1' \leq S_2' \leq ... \leq S_{n-1}' \leq S_n'$,
there must be $S_1 \leq S_1',S_2 \leq S_2',...,S_n \leq S_n'$.

$\therefore$  $S_1+T_1 \leq S_1'+T_1,S_2+T_2 \leq S_2'+T_2,...,S_n +T_n\leq S_n'+T_n$.

 $\therefore$ max($S_1+T_1,S_2+T_2,...,S_n+T_n$)$\leq$ max($S_1'+T_1,S_2'+T_2, ...,S_n'+T_n$)

 $\because$ min($S_1$,$S_2$,...,$S_n$)=min($S_1'$,$S_2'$,...,$S_n'$)=0

$\therefore$ The makespan achieved by A$_1$ is less than or equal to that of any feasible schedule.

So the schedule achieved by A$_1$ is optimal. Theorem \ref{th6} is proved.\qquad\end{proof}

 \vspace{3ex}
\begin{theorem}
\label{th7}
The four-dimensional packing and scheduling problem with order constraints, problem P$_{1}$, is weak computable.
\end{theorem}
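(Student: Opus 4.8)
The plan is to show that algorithm A$_1$ both returns an optimal schedule and halts after finitely many arithmetic and logical operations on the real parameters; the first half is already supplied by Theorem~\ref{th6}, so the bulk of the argument concerns finiteness. First I would recall that, by Theorem~\ref{th6}, the schedule produced by A$_1$ is optimal for P$_1$, and that by Theorem~\ref{th4} each invocation of the subprocedure A$_0$ terminates after at most $6^n\cdot 4^{3n^2}$ iterations, i.e. after finitely many operations. It then remains only to bound the number of times A$_1$ invokes A$_0$ and performs auxiliary arithmetic.

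The key observation is that every beat of A$_1$ removes at least one item from the container. Indeed, in step~1.2 the duration of beat $j$ is set to $\Delta t_j=\min\{T_i\mid i\in R\}$, so after the remaining baking times are decremented in step~2 at least one item attains a remaining time of $0$ and is taken out; in the terminating branch ($k=n$) all outstanding items are placed and baked to completion. Hence each of the $n$ items can trigger the end of at most one beat, which bounds the number of beats $M$ by $n$. I would phrase this formally as a termination argument on the strictly decreasing count of not-yet-finished items.

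Within a single beat, step~1.1 adds items to $R$ one at a time and calls A$_0$ after each addition, stopping as soon as a placement fails or $k=n$; since $R$ never holds more than $n$ items, this costs at most $n$ calls to A$_0$ per beat. Combining this with the bound $M\le n$, algorithm A$_1$ performs at most $n^2$ calls to A$_0$ overall, together with finitely many comparisons and subtractions (computing each $\min$, decrementing the $T_i$, and forming $t_{j+1}=t_j+\Delta t_j$). Moreover the resulting schedule is finitely representable: the items stay fixed throughout each beat, so the entire survival history is encoded by the at most $n$ static layouts returned by A$_0$ together with the beat times $t_1,\dots,t_M$. Each call and each arithmetic step acts on the given real parameters and terminates finitely, so a complete run of A$_1$ consists of finitely many arithmetic and logical operations.

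I would close by assembling the two halves: A$_1$ is deterministic, halts after finitely many operations, and returns an optimal schedule (Theorem~\ref{th6}), which is precisely the definition of weak computability for P$_1$. The main obstacle is the finiteness bound on the number of beats; once one notices that each beat strictly decreases the number of surviving items, the remaining bookkeeping — counting the A$_0$ calls per beat and checking that every time quantity is obtained by finite arithmetic on the inputs — is routine.
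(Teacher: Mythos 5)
Your proposal is correct and follows essentially the same route as the paper's proof: finiteness of the number of beats because each beat completes at least one item, finiteness of the A$_0$ calls per beat via Theorem~\ref{th4}, and optimality via Theorem~\ref{th6}. Your explicit bounds ($M\le n$ beats, at most $n^2$ calls to A$_0$) are slightly sharper than the paper's qualitative "finite" statements, but the argument is the same.
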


\begin{proof}

First, we prove that algorithm A$_1 $ can finish the computation in finite steps:

 The scheduling process of  A$_1$ is as shown in Fig. \ref{fig3}.
  At the end of each beat, at least one item will complete its baking work.
  As the number of items to be baked is finite,
 algorithm A$_1$ can reach $t_M$ in finite beats.

 For each time $t_j$ in $\{t_1, t_2 ,...,t_M\}$,
 A$_1$ has a maximum $R$ set and the corresponding layout
 by calling A$_0$ finitely.
 And according to Theorem \ref{th4}, A$_0$ can exactly solve P$_0$ by finite operations.
 So, the computation for each time $t_j$ can finish by finite operations.

 Therefore, algorithm A$_1 $ can reach $t_M$ by finite operations.

 At time $t_M$, the remaining items in the container is finite,
$\Delta t_{left}$ can be obtained by finite operations.

 In summary, algorithm A$_1$ can finish its computation by finite operations.

 Then according to Theorem \ref{th6},
 algorithm A$_1$ can exactly solve problem P$_1$ by finite operations.
  So, problem P$_1$ is weak computable.

 Theorem \ref{th7} is proved.\qquad\end{proof}

\section{Computability on the original problem P$_2$}
Based on above proofs, we can prove the weak computability of the original problem P$_2$ in this section.

 \vspace{3ex}
\begin{theorem}
\label{th8}
 The original four-dimensional packing and scheduling problem, problem P$_{2}$, is weak computable.
\end{theorem}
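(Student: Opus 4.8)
The plan is to reduce the order-free problem P$_2$ to finitely many instances of the order-constrained problem P$_1$, which Theorem~\ref{th7} has already shown to be weak computable. The key observation is that any feasible schedule for P$_2$ induces a linear order on the items: sorting the items by their starting times $S_i$ (breaking ties arbitrarily) yields a permutation $\pi$ with $S_{\pi(1)} \leq S_{\pi(2)} \leq \cdots \leq S_{\pi(n)}$. Relabelling the items according to $\pi$ turns this schedule into a schedule for the P$_1$ instance on the relabelled items that respects the ascending serial-number constraint. Thus every P$_2$ schedule is, up to relabelling, a P$_1$ schedule for one of the $n!$ permutations of the item indices.

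First I would define algorithm A$_2$ as follows: for each of the $n!$ permutations of $\{1,\ldots,n\}$, relabel the items accordingly, run algorithm A$_1$ on the relabelled instance, and record the resulting makespan; finally output the schedule attaining the smallest makespan over all permutations. Since A$_1$ makes the items enter in ascending serial order, each run produces a schedule that is also a legitimate (order-free) schedule for P$_2$, so every recorded makespan is at least the true optimum $M^{*}$ of P$_2$.

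Next I would prove optimality. Let $\mathscr{S}^{*}$ be an optimal P$_2$ schedule with makespan $M^{*}$, and let $\pi$ be the permutation obtained by sorting its starting times as above. After relabelling by $\pi$, the schedule $\mathscr{S}^{*}$ is a feasible schedule for the corresponding P$_1$ instance, so by Theorem~\ref{th6} the schedule A$_1$ produces on this instance has makespan at most $M^{*}$. Being also a feasible P$_2$ schedule, its makespan is at least $M^{*}$, hence exactly $M^{*}$. Therefore the minimum makespan recorded by A$_2$ equals $M^{*}$, and the associated schedule is optimal for P$_2$.

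Finally, finiteness follows at once: there are $n!$ permutations, and by Theorem~\ref{th7} each call to A$_1$ terminates after finitely many arithmetic and logical operations, so A$_2$ halts after finitely many operations and solves P$_2$ exactly, which establishes weak computability. The step deserving the most care, and the main obstacle, is the reduction argument itself: one must verify that sorting by starting time always yields an admissible P$_1$ order even when several items share a starting time, and that the relabelling makes $\mathscr{S}^{*}$ genuinely feasible for P$_1$ so that no order constraint is violated, allowing Theorem~\ref{th6} to be invoked to squeeze A$_1$'s makespan down to $M^{*}$.
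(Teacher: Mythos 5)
Your proposal is correct and follows essentially the same route as the paper: enumerate all $n!$ permutations, run A$_1$ on each, return the schedule of minimum makespan, and justify optimality by noting that the optimal P$_2$ schedule induces an order that A$_2$ has enumerated, so Theorem~\ref{th6} forces A$_1$'s makespan on that permutation down to the optimum. Your extra attention to tie-breaking among equal starting times and to the relabelling being feasible for P$_1$ is a small refinement the paper leaves implicit.
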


\begin{proof}

First, design algorithm A$_2$ as follows:

Enumerate all the permutations of the items.
And for each permutation, start from time 0 and solve problem P$_1$ by A$_1$. According to Theorem \ref{th7}, we can achieve an optimal schedule for the current permutation in finite operations.
Then, output the one with the minimum makespan among all solutions for different permutations.

Second, prove that the scheduling result obtained by A$_2$ is optimal:

 Let $Q$ be the scheduling result obtained by A$_2$,
 and let $Q'$ be any feasible schedule for P$_2$.

 As naturally there exists a baking sequence for $Q'$, and A$_2 $ has enumerated the corresponding sequence and has obtained an optimal schedule $Q_1$ for this sequence,
so the makespan of $Q_1$ $ \leq $ the makespan of $Q'$.

 As $Q$ has the minimum makespan among all permutations,
 so the makespan of $Q$$ \leq $ the makespan of $Q_1$.

 In summary,  the makespan of $Q$ $ \leq $  the makespan of $Q'$.

 Therefore, the scheduling result obtained by A$_2$ is optimal.

 As the number of all different permutations for finite items is finite,
 A$_2$ can complete the computation in finite operations
 and obtain an optimal schedule for problem P$_2$.

 Therefore, problem P$_2$ is weak computable. Theorem \ref{th8} is proved.\qquad\end{proof}

\section{Conclusion}

We present a four-dimensional packing and scheduling problem in this paper,
which has a significant scientific value and finds many practical applications.
It is neither a three-dimensional packing problem nor a four-dimensional packing problem,
but an a valuable expansion of the packing problem.

We further discuss a natural and general case of this problem that all parameters are real numbers, and prove its weak computability.
First, we prove that the cuboid packing decision problem with real parameters is weak computable.
Based on the proof, we then prove that the four-dimensional packing and scheduling problem is weak computable, indicating that there exists an exact and deterministic algorithm that could solve the problem by finite operations.
Although this algorithm has too high complexity to be practical,
it builds a firm footstone for the future work on approximate approaches.

\end{document}